\documentclass[11pt,reqno]{amsart}
\usepackage[T1]{fontenc}
\usepackage{lmodern}
\usepackage[english]{babel}
\usepackage{amsmath,amssymb,amsthm,mathtools}
\usepackage[margin=1.2in]{geometry}
\usepackage[colorlinks=true,linkcolor=blue,citecolor=blue,urlcolor=blue]{hyperref}
\hypersetup{
  pdftitle={Asymptotically unit-rate storage codes from binary BCH codes},
  pdfauthor={Aryeh Lev Zabokritskiy (Yohananov)},
  pdfkeywords={Storage codes, BCH codes, Cayley graphs, polynomial method},
  pdflang={en},
  pdfdisplaydoctitle=true
}
\newtheorem{theorem}{Theorem}[section]

\newtheorem{proposition}[theorem]{Proposition}
\newtheorem{corollary}[theorem]{Corollary}
\theoremstyle{remark}

\DeclareMathOperator{\rank}{rank}
\DeclareMathOperator{\spanF}{span}
\DeclareMathOperator{\wt}{wt}
\newcommand{\F}{\mathbb F}
\newcommand{\Cay}{\operatorname{Cay}}
\title{Asymptotically unit-rate storage codes from binary BCH codes}
\author{Aryeh Lev Zabokritskiy (Yohananov)}
\address{Department of Computer Science, Tel-Hai University of Kiryat Shmona in the Galilee, Israel; and MIGAL---Galilee Research Institute, Israel}
\email{yuhanalev@telhai.ac.il}
\date{}
\keywords{Storage codes, BCH codes, Cayley graphs, polynomial method}
\begin{document}

\begin{abstract}
A storage code on a graph assigns a symbol to each vertex so that the symbol can be recovered from its neighbors. We prove that the binary full-parity storage codes on triangle-free coset graphs of primitive BCH codes have rate tending to one for every fixed error-correction capability of at least two. The conclusion also holds for an increasing error budget under an explicit growth condition. The proof bounds the binary rank of convolution matrices associated with polynomial maps. Low coordinate degree forces cancellation in the polynomial representing the image, and a polynomial rank bound converts this cancellation into a quantitative estimate for the storage rate. For the double-error-correcting family over $\F_{2^m}$, we prove that the storage deficiency is $\Theta(((1+\sqrt5)/4)^m)$. This refinement follows from an exact Fibonacci formula for a polynomial coefficient rank and a matching exponential lower bound after finite-field evaluation.
\end{abstract}
\maketitle

\section{Introduction}

A storage code on a graph is a collection of assignments of symbols to the vertices in which each symbol is determined by the symbols at neighboring vertices. The graph prescribes the available recovery sets, and the objective is to maximize the rate of the code. Complete graphs support rates tending to one through a single parity check. Requiring the graph to be triangle-free removes this construction and leads to a substantially different question.

Barg and Z\'emor~\cite{BZ} studied binary full-parity storage codes on coset graphs of linear codes and constructed families with rates above one half. They also identified BCH and Reed--Muller codes as natural sources of further examples. Constructions yielding asymptotically unit-rate binary storage codes on triangle-free graphs were obtained by Golovnev and Haviv~\cite{GH}, Barg, Schwartz, and Yohananov~\cite{BSY}, and Huang and Xiang~\cite{HX}. Deng, Huang, Weng, and Xiang~\cite{DHWX} proved that the family arising from the double-error-correcting BCH code has asymptotically unit rate, and studied further constructions with two coordinates over an extension field.

Here we consider the primitive binary BCH code with designed distance $2s+1$, whose parity-check columns can be written as
\[
 (a,a^3,\ldots,a^{2s-1}),\qquad a\in\F_{2^m}^{*}.
\]
We prove that its coset graph supports a full-parity storage code of rate tending to one for every fixed $s\ge2$. More quantitatively, writing $R_{m,s}$ for this storage rate and $d_s=1+\lfloor\log_2s\rfloor$, we obtain
\[
 1-R_{m,s}\le 2\exp\!\left(-\frac{m}{2s d_s^2}\right).
\]
The same estimate proves convergence to one whenever $s=s(m)\ge2$ and $s(\log(s+1))^2=o(m)$. The underlying graphs are connected and triangle-free. The parameter $s$ refers to the error-correction capability of the BCH code used to define the graph.

The argument uses the Boolean algebraic degree of the parity-check columns. We establish a rank bound for convolution by the image of a polynomial map over $\F_2$ and apply it to the BCH power map. The key step is that low coordinate degrees force high-degree coefficients to vanish in the polynomial recording the parity of the fibres. The subsequent rank step is the polynomial rank argument of Croot, Lev, and Pach~\cite{CLP}. The relevant dimensions explain the scope of the application: for fixed $s$, the BCH syndrome space has dimension at most $sm$, whereas the nonconstant quadratic evaluation map has target dimension $m(m+1)/2$.

For the double-error-correcting family we also determine the exponential order of the rank. With $\varphi=(1+\sqrt5)/2$, its storage deficiency is $\Theta((\varphi/2)^m)$. The proof computes an exact Fibonacci formula for the coefficient rank of the polynomial used in~\cite{DHWX}, and derives matching exponential upper and lower bounds after evaluation over $\F_{2^m}$. This sharpens the displayed rank estimate in that work, whose exponential base is $2+\sqrt2$ before normalization.

A related open problem comes from the quadratic Reed--Muller construction proposed by Barg and Z\'emor~\cite[author version, Section~7.2, item~3]{BZ}, who left the determination of its storage-code rates open. We do not settle whether these rates tend to one. Instead, we identify the corresponding Cayley graph with the rank-one graph on symmetric binary matrices and express the unit-rate question as a binary matrix-rank problem in that setting. We also give an equivalent formulation through the intersection of the graph's cut and cycle spaces. These connections offer further approaches to the Reed--Muller problem: rank and dimension estimates in these settings give storage-rate bounds, while storage-code estimates constrain the same algebraic and graph-theoretic objects.

We first describe the storage construction and its convolution matrix. We then prove the polynomial rank bound and apply it to the BCH family. The double-error-correcting case receives a sharper analysis through its coefficient spaces. The final section gives these reformulations for the quadratic Reed--Muller family.

\section{Storage codes and polynomial maps}\label{sec:storage}

The storage rate in the construction is determined by a binary matrix indexed by the vertices of a Cayley graph. We describe this matrix explicitly so that its rank can be estimated without computing the dimension of the original BCH code.

Let $G$ be a finite simple graph with vertex set $V$ and adjacency matrix $A(G)$. The binary full-parity storage code on $G$ is
\begin{equation}\label{eq:storage-code}
 C(G)=\ker_{\F_2}(I+A(G)).
\end{equation}
For every $c\in C(G)$ and $v\in V$, the parity check at $v$ gives
\[
 c_v=\sum_{u\in N_G(v)}c_u.
\]
Thus each coordinate can be recovered from its neighbors. The rate of this particular code is
\begin{equation}\label{eq:rate}
 R(C(G))=1-\frac{\rank_{\F_2}(I+A(G))}{|V|}.
\end{equation}
This is the full-parity construction of~\cite[author version, Section~3]{BZ}; the identity matrix supplies the symbol being recovered.

For a binary vector space $V$ and a set $S\subseteq V\setminus\{0\}$, the Cayley graph $\Cay(V,S)$ has vertex set $V$, with distinct vertices $x,y$ adjacent when $x+y\in S$. It is connected precisely when $S$ spans $V$. If $H$ is a binary matrix whose distinct nonzero columns are the elements of $S$, the syndrome map identifies this graph with the coset graph of $\ker H$, using $\operatorname{im}H$ as the vertex space.

We will use connection sets parametrized by a map $h:\F_2^m\to\F_2^D$. All sums in the following definition are in $\F_2$:
\begin{equation}\label{eq:matrix}
 M_h(x,y)=\sum_{a\in\F_2^m}\mathbf1\{x+y=h(a)\},
 \qquad x,y\in\F_2^D.
\end{equation}
When $h$ is injective and $h(0)=0$, this is exactly $I+A(G)$ for $G=\Cay(\F_2^D,h(\F_2^m)\setminus\{0\})$. For a general map, the entry records the parity of the number of preimages of $x+y$.

Every function $\F_2^m\to\F_2$ has a unique multilinear polynomial representation. Its Boolean algebraic degree is the degree of this representation. A coordinate degree bound for $h$ is preserved by a binary linear change of coordinates in the target. In particular, one may replace a redundant target space by the span of the image: every linear coordinate on that span extends to a linear functional on the original target.

Retaining a larger ambient space gives one identical connected component on each coset of the span. Both the rank of the full-parity matrix and its order are multiplied by the number of cosets. Thus its normalized rank, and hence the storage rate, agree with those of a single component.

\section{A rank bound for polynomial maps}\label{sec:rank}

We now bound the rank of \eqref{eq:matrix} in terms of the input dimension and coordinate degree of $h$. The proof first represents the parity of each fibre by a polynomial of low degree. Splitting its monomials between the two matrix indices then bounds the rank.

For an integer $t$, put
\[
 B_D(t)=\sum_{j=0}^{\min(t,D)}\binom Dj,
\]
with $B_D(t)=0$ when $t<0$. Thus $B_D(t)$ counts the squarefree monomials of degree at most $t$ in $D$ variables.

\begin{theorem}\label{thm:polynomial-rank}
Let $m,D,d\ge1$ be integers, and let $h:\F_2^m\to\F_2^D$ have Boolean coordinate degrees at most $d$. Set $k=\lceil m/d\rceil$. If $k>D$, then $M_h=0$. If $k\le D$, then
\begin{equation}\label{eq:exact-rank}
 \rank_{\F_2}M_h
 \le B_D\!\left(\left\lfloor\frac{D-k}{2}\right\rfloor\right)
   +B_D\!\left(\left\lceil\frac{D-k}{2}\right\rceil-1\right).
\end{equation}
In either case,
\begin{equation}\label{eq:exp-rank}
 \frac{\rank_{\F_2}M_h}{2^D}
 \le \min\left\{\frac12,\;2\exp\!\left(-\frac{k^2}{2D}\right)\right\}.
\end{equation}
\end{theorem}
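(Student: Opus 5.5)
We write $M_h(x,y)=f(x+y)$, where $f:\F_2^D\to\F_2$ is given by $f(z)=|h^{-1}(z)|\bmod 2$. The plan is to show that $f$ has low degree, and then to split its monomials between $x$ and $y$ as in the Croot--Lev--Pach argument~\cite{CLP}.

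\emph{Degree bound on $f$.} The indicator of $h(a)=z$ is $\prod_{i=1}^D(1+h_i(a)+z_i)$. Summing over $a$ gives
\[
 f(z)=\sum_{a\in\F_2^m}\prod_{i=1}^D(1+h_i(a)+z_i).
\]
We expand the product and reduce to multilinear form in $z$. The coefficient of $z^T=\prod_{i\in T}z_i$ is
\[
 \sum_{a}\prod_{i\notin T}(1+h_i(a)),
\]
which is the sum over $\F_2^m$ of a Boolean function of degree at most $d(D-|T|)$. A Boolean function on $\F_2^m$ sums to zero whenever its degree is less than $m$, since the only monomial with nonzero total sum is $x_1\cdots x_m$. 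So the coefficient vanishes unless $d(D-|T|)\ge m$, that is, unless $D-|T|\ge k$. Hence $\deg f\le D-k$. If $k>D$, every coefficient vanishes, so $f=0$ and $M_h=0$. This cancellation step is the crux. It is short, but it is where the coordinate degree enters, and one must use the identity $\lceil m/d\rceil$ correctly: $d(D-|T|)\ge m$ holds if and only if $D-|T|\ge k$.

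\emph{Rank splitting.} Write $f(x+y)$ as a sum of monomials $x^Sy^U$ with $|S|+|U|\le D-k$. Put $e=D-k$. Each such monomial has $|S|\le\lfloor e/2\rfloor$ or $|U|\le\lceil e/2\rceil-1$, since otherwise $|S|+|U|\ge\lfloor e/2\rfloor+\lceil e/2\rceil+1=e+1$. We group the monomials of the first kind by $x^S$ and the remaining ones by $y^U$. This writes $M_h$ as a sum of at most $B_D(\lfloor e/2\rfloor)+B_D(\lceil e/2\rceil-1)$ rank-one matrices, which gives \eqref{eq:exact-rank}.

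\emph{Exponential form.} The bound $\frac12$: when $k\le D$, the two sums together count at most the squarefree monomials of degree below $D/2$ together with those of degree at most $D/2-1$. More directly, since $\lfloor e/2\rfloor+(\lceil e/2\rceil-1)=e-1\le D-2$, we have
\[
 B_D(\lfloor e/2\rfloor)+B_D(\lceil e/2\rceil-1)\le B_D(j)+B_D(D-2-j)\le 2^D-\binom{D}{D-1}\cdot\ldots
\]
It is cleaner to use the symmetry $\binom Dj=\binom D{D-j}$: the index sets $\{0,\dots,j\}$ and $\{j+2,\dots,D\}$ are disjoint (after reflecting the second sum), so the total is at most $2^D-\binom D{j+1}\le 2^D$. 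To get the factor $\frac12$, we compare instead with the two halves of the binomial distribution: $\lfloor e/2\rfloor\le (D-1)/2$ and $\lceil e/2\rceil-1\le (D-1)/2-\tfrac12$. Then each term is at most $B_D(\lfloor (D-1)/2\rfloor)\le 2^{D-1}$, and their disjoint-after-reflection arrangement gives a total of at most $2^{D-1}$; the case $k>D$ is trivial. For the Gaussian bound, both thresholds are at most $(D-k)/2$. Hoeffding's inequality gives $B_D(D/2-k/2)\le 2^D\exp(-2(k/2)^2/D)=2^De^{-k^2/(2D)}$, and two such terms give the factor $2$. The $\frac12$ bound, with its floor and ceiling bookkeeping in the parity cases, is the fiddliest routine step.
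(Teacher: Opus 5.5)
Your proofs of $M_h=0$ for $k>D$, of \eqref{eq:exact-rank}, and of the term $2\exp(-k^2/(2D))$ are correct and follow the paper's route: fibre polynomial, vanishing of all coefficients of degree above $D-k$, the Croot--Lev--Pach split, and a Hoeffding/Chernoff tail bound. The gap is the factor $\frac12$ in \eqref{eq:exp-rank}. You try to derive it from \eqref{eq:exact-rank}, but the inequality you need, $B_D(\lfloor e/2\rfloor)+B_D(\lceil e/2\rceil-1)\le 2^{D-1}$, is false. For $D=3$, $k=1$ (realized, e.g., by $m=d=1$) the left side is $B_3(1)+B_3(0)=5>4$. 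For $D=4$, $k=1$ it is $B_4(1)+B_4(1)=10>8$. Whenever $k$ is small compared with $\sqrt D$, it is close to $2^D$. Knowing that each term is at most $2^{D-1}$ only bounds the sum by $2^D$, and your reflection argument only gives $2^D-\binom D{j+1}$. So the claim that the ``disjoint-after-reflection arrangement'' yields $2^{D-1}$ is unsupported. In this same regime $2\exp(-k^2/(2D))>1$ as well. There the $\frac12$ is the only nontrivial part of the minimum, and it cannot come from counting monomials.

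The missing idea is that $M_h$ squares to zero. Write $T_b$ for the translation matrix $(T_bg)(x)=g(x+b)$, so that $M_h=\sum_{a\in\F_2^m}T_{h(a)}$. The $T_b$ commute and satisfy $T_b^2=I$, so over $\F_2$ the cross terms cancel in pairs and $M_h^2=\sum_a T_{h(a)}^2=2^mI=0$, using $m\ge1$. In your notation, $f*f=\bigl(\sum_z f(z)\bigr)\delta_0$ over $\F_2$, and $\sum_z f(z)\equiv 2^m\equiv0$. Hence $\operatorname{im}M_h\subseteq\ker M_h$, so $\rank M_h\le 2^D-\rank M_h$, i.e.\ $\rank M_h\le 2^{D-1}$. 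This is how the paper obtains the $\frac12$, and adding it completes your argument.
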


\begin{proof}
Write $h=(h_1,\ldots,h_D)$ and define the multilinear polynomial
\begin{equation}\label{eq:fibre-polynomial}
 F(z)=\sum_{a\in\F_2^m}\prod_{i=1}^D(1+z_i+h_i(a)).
\end{equation}
For a fixed $a$, the product is one precisely when $z=h(a)$. Hence $M_h(x,y)=F(x+y)$. For $S\subseteq\{1,\ldots,D\}$, the coefficient of $z_S=\prod_{i\in S}z_i$ in $F$ is
\begin{equation}\label{eq:coefficient}
 [z_S]F=\sum_{a\in\F_2^m}\prod_{i\notin S}(1+h_i(a)).
\end{equation}
The summand, viewed as a Boolean polynomial in $a$, has degree at most $d(D-|S|)$. A Boolean polynomial of degree less than $m$ sums to zero over $\F_2^m$: each squarefree monomial of degree $j<m$ takes the value one at $2^{m-j}$ points. It follows that \eqref{eq:coefficient} vanishes whenever $D-|S|<k$. Consequently $\deg F\le D-k$, and $F=0$ if $k>D$.

Suppose $k\le D$, and put
\[
 u=\left\lfloor\frac{D-k}{2}\right\rfloor,
 \qquad v=\left\lceil\frac{D-k}{2}\right\rceil-1.
\]
Every monomial $x_Ay_B$ in $F(x+y)$ satisfies $|A|+|B|\le D-k$. Group those with $|A|\le u$ by their $x$-monomial. Every remaining monomial has $|B|\le v$, so these can be grouped by their $y$-monomial. Thus
\[
 F(x+y)=\sum_{|A|\le u}x_Ap_A(y)
       +\sum_{|B|\le v}q_B(x)y_B
\]
for suitable polynomials $p_A,q_B$. Each summand, evaluated on all pairs $(x,y)$, gives a matrix of rank at most one. Rank subadditivity proves \eqref{eq:exact-rank}. This is the polynomial separation used in the proof of~\cite[Lemma~1]{CLP}.

For the exponential estimate, let $X$ have binomial distribution with parameters $D$ and $1/2$. Since $v\le u\le(D-k)/2$, the right side of \eqref{eq:exact-rank}, divided by $2^D$, is at most $2\Pr(X-D/2\le-k/2)$. For $t>0$, the exponential Markov inequality gives
\[
 \Pr(X-D/2\le-k/2)
 \le e^{-tk/2}\cosh(t/2)^D
 \le \exp(-tk/2+Dt^2/8).
\]
Here $\log\cosh w\le w^2/2$, which follows by integrating $\tanh w\le w$ for $w\ge0$. Taking $t=2k/D$ gives the exponential term in \eqref{eq:exp-rank}.

Finally, for $b\in\F_2^D$, let $T_b$ be the translation matrix acting on functions by $(T_bf)(x)=f(x+b)$. Then $M_h=\sum_aT_{h(a)}$. These matrices commute and $T_b^2=I$, so in characteristic two
\[
 M_h^2=\sum_{a\in\F_2^m}T_{h(a)}^2=2^m I=0.
\]
Therefore $\operatorname{im}M_h\subseteq\ker M_h$ and $\rank M_h\le2^{D-1}$.
\end{proof}

The exact rounding in \eqref{eq:exact-rank} is useful at the endpoints. When $k=D$, it gives rank at most one. This is attained by $h(a)=a$ with $D=m$ and $d=1$, for which $M_h$ is the all-one matrix.

\begin{corollary}\label{cor:criterion}
For a sequence of maps as in Theorem~\ref{thm:polynomial-rank}, the normalized rank $2^{-D}\rank M_h$ tends to zero if
\[
 \frac{m^2}{d^2D}\longrightarrow\infty.
\]
If the maps are injective and send zero to zero, the corresponding full-parity storage rates tend to one.
\end{corollary}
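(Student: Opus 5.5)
The corollary follows from the exponential bound \eqref{eq:exp-rank} of Theorem~\ref{thm:polynomial-rank}, together with the identification of $M_h$ with $I+A(G)$ from Section~\ref{sec:storage}. With $k=\lceil m/d\rceil\ge m/d$ we have
\[
 \frac{k^2}{2D}\ \ge\ \frac{m^2}{2d^2D},
\]
so \eqref{eq:exp-rank} gives
\[
 2^{-D}\rank_{\F_2}M_h\ \le\ 2\exp\!\left(-\frac{m^2}{2d^2D}\right).
\]
By hypothesis the exponent $m^2/(d^2D)$ tends to infinity along the sequence, so the right-hand side tends to zero. No case split on $k>D$ is needed, since \eqref{eq:exp-rank} holds in either case; when $k>D$ we even have $M_h=0$.

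For the storage statement, assume each $h$ is injective with $h(0)=0$. By the remark following \eqref{eq:matrix}, $M_h=I+A(G)$ for $G=\Cay(\F_2^D,h(\F_2^m)\setminus\{0\})$, and $G$ has $2^D$ vertices. By \eqref{eq:rate},
\[
 1-R(C(G))=\frac{\rank_{\F_2}M_h}{2^D}.
\]
This tends to zero by the first part, so the rates tend to one.

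There is no real obstacle. The only points to check are the direction of the ceiling inequality $k\ge m/d$ and that the vertex count is exactly $2^D$. The graph may be disconnected if the image of $h$ does not span $\F_2^D$, but the rate formula \eqref{eq:rate} does not require connectivity. As noted at the end of Section~\ref{sec:storage}, passing to the span of the image would change nothing in any case.
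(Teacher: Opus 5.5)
Your proposal is correct and follows the paper's argument: $k=\lceil m/d\rceil\ge m/d$ gives $k^2/D\to\infty$, then \eqref{eq:exp-rank} gives the rank statement, and \eqref{eq:rate} with $M_h=I+A(G)$ gives the rate statement in the injective case. Your side remarks are also accurate: no separate case is needed when $k>D$, and the rate formula does not need the graph to be connected.
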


\begin{proof}
The hypothesis implies $k^2/D\to\infty$. Apply \eqref{eq:exp-rank}, followed by \eqref{eq:rate} in the injective case.
\end{proof}

\section{BCH coset graphs}\label{sec:bch}

For BCH parity checks, the number of binary output coordinates grows linearly with the extension degree when the error budget is fixed. This allows the rank estimate to give asymptotically unit storage rate.

Let $m\ge1$ and $s\ge2$ be integers, put $q=2^m$, and suppose that $2s+1\le q-1$. Define
\begin{equation}\label{eq:bch-map}
 h_s:\F_q\longrightarrow\F_q^s,
 \qquad h_s(a)=(a,a^3,\ldots,a^{2s-1}).
\end{equation}
After choosing a binary basis of $\F_q$, use the columns $h_s(a)$ for $a\ne0$ as a binary parity-check matrix $H_{m,s}$. Its kernel is the primitive narrow-sense binary BCH code $B_{m,s}$ of designed distance $2s+1$. Indeed the odd power checks imply all power checks with exponents $1,\ldots,2s$, since an even power sum is the square of the power sum with half its exponent.

For completeness, these checks exclude every nonzero codeword of weight at most $2s$. If such a word had support $a_1,\ldots,a_w$, then
\[
 \sum_{i=1}^w a_i^j=0,\qquad 1\le j\le w.
\]
The matrix $(a_i^j)_{1\le j,i\le w}$ is invertible: its determinant is the product of the nonzero $a_i$ and the Vandermonde determinant on the distinct $a_i$. This contradicts its annihilating the all-one vector. Thus $B_{m,s}$ corrects at least $s$ errors.

Let
\[
 V_{m,s}=\spanF_{\F_2}\{h_s(a):a\in\F_q\},
 \qquad D_{m,s}=\dim_{\F_2}V_{m,s},
\]
and define
\begin{equation}\label{eq:bch-graph}
 G_{m,s}=\Cay(V_{m,s},h_s(\F_q^*)),
 \qquad C_{m,s}=C(G_{m,s}).
\end{equation}
The graph is connected by its definition, and $D_{m,s}\le sm$. Its degree equals the length $q-1$ of the seed code $B_{m,s}$, whereas the storage code has length $2^{D_{m,s}}$.

\begin{theorem}\label{thm:bch}
For $s\ge2$ and $2s+1\le2^m-1$, the graph $G_{m,s}$ is triangle-free and
\begin{equation}\label{eq:bch-rate}
 1-R(C_{m,s})\le
 2\exp\!\left(-\frac{m}{2s(1+\lfloor\log_2s\rfloor)^2}\right).
\end{equation}
Consequently $R(C_{m,s})\to1$ as $m\to\infty$ for every fixed $s\ge2$. The same conclusion holds for $s=s(m)\ge2$ satisfying $s(\log(s+1))^2=o(m)$.
\end{theorem}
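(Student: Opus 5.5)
The plan is to handle triangle-freeness directly from the distance of the BCH code, and the rate bound by applying Theorem~\ref{thm:polynomial-rank} to the map $h_s$. This requires choosing coordinates on $V_{m,s}$ and controlling the coordinate degrees.

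\emph{Triangle-freeness.} A triangle $x,y,z$ in $G_{m,s}$ would give three elements $x+y=h_s(a)$, $y+z=h_s(b)$ and $x+z=h_s(c)$ with $a,b,c\in\F_q^*$ that sum to zero. The elements $a,b,c$ are pairwise distinct: for instance, $a=b$ would force $h_s(c)=0$, so $c=0$. Hence $h_s(a)+h_s(b)+h_s(c)=0$ is a codeword of $B_{m,s}$ of weight three. Since $B_{m,s}$ has minimum distance at least $2s+1\ge5$, this is a contradiction.

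\emph{Rate.} Because $h_s$ is injective (its first coordinate is $a$) and $h_s(0)=0$, the discussion after \eqref{eq:matrix} gives $I+A(G_{m,s})=M_{h_s}$, where $h_s$ is viewed as a map $\F_2^m\to V_{m,s}\cong\F_2^{D}$ with $D=D_{m,s}\le sm$. Target coordinate degrees are preserved under linear change of coordinates, as noted in Section~\ref{sec:storage}. So it suffices to bound the Boolean degree of each $\F_2$-coordinate of $a\mapsto a^{2j-1}$.

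The standard fact here is that $a\mapsto a^e$ has binary coordinate degree at most the binary weight $\wt(e)$. Indeed, write $a^e=\prod_{i\in\mathrm{supp}(e)}a^{2^i}$. Each Frobenius power $a^{2^i}$ is $\F_2$-linear in $a$, so each $\F_2$-coordinate of the product is a polynomial of degree at most $\wt(e)$ in the coordinates of $a$. For $e=2j-1\le 2s-1<2^{d_s+1}$, we have $\wt(e)\le d_s+1$, where $d_s=1+\lfloor\log_2 s\rfloor$. I will try to sharpen this to $\wt(2j-1)\le d_s$, which holds because $2j-1\le 2s-1$ and $2s-1$ has at most $1+\lfloor\log_2 s\rfloor$ bits. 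Thus $d=d_s$.

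Theorem~\ref{thm:polynomial-rank} then gives
\[
1-R=\frac{\rank M_{h_s}}{2^D}\le 2\exp\!\left(-\frac{k^2}{2D}\right),\qquad k=\lceil m/d_s\rceil .
\]
Since $k^2\ge m^2/d_s^2$ and $D\le sm$, we get $k^2/(2D)\ge m/(2sd_s^2)$, which is \eqref{eq:bch-rate}. For fixed $s$ the right side tends to zero. If $s=s(m)$ with $s(\log(s+1))^2=o(m)$, then $d_s^2=O((\log(s+1))^2)$, so $m/(2sd_s^2)\to\infty$. In that case the condition $2s+1\le 2^m-1$ holds eventually.

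\emph{Main obstacle.} The key point is the degree count $\wt(e)\le d_s$ together with passing to coordinates on $V_{m,s}$. One might worry that a basis for $V_{m,s}$ mixes coordinates, but every linear coordinate on the span is a linear combination of the original coordinates, as stated in Section~\ref{sec:storage}. Degree is subadditive under such sums, so the bound is preserved.
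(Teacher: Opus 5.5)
Your rate argument is the same as the paper's: apply Theorem~\ref{thm:polynomial-rank} with $D_{m,s}\le sm$ and $k=\lceil m/d_s\rceil$, after passing to coordinates on $V_{m,s}$. Your triangle-freeness argument is a valid alternative to the paper's. You use the minimum distance of $B_{m,s}$, where a weight-$3$ codeword is excluded since $3\le 2s$. The paper instead computes directly from the first two coordinates that $a^3+b^3+(a+b)^3=ab(a+b)\neq0$.

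However, the step you yourself identify as key, $\wt_2(2j-1)\le d_s$, is justified incorrectly. You say it holds because $2s-1$ has at most $1+\lfloor\log_2 s\rfloor$ bits, but that is false whenever $s$ is not a power of two. For $s=3$, $2s-1=5=101_2$ has three bits while $d_3=2$. In general, with $2^r\le s<2^{r+1}$, the number $2s-1$ has $r+2=d_s+1$ bits unless $s=2^r$. Bit-length only gives your first bound $\wt_2(2j-1)\le d_s+1$. That bound yields \eqref{eq:bch-rate} with $(d_s+1)^2$ in place of $d_s^2$, which is not the stated inequality. It would still suffice for the two qualitative limit statements, since $(d_s+1)^2=O((\log(s+1))^2)$.

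The missing observation concerns the number of ones, not the number of bits. Every integer with at least $r+2$ ones is at least $2^{r+2}-1$. On the other hand, $2j-1\le 2s-1\le 2^{r+2}-3$ because $s\le 2^{r+1}-1$. Hence $\wt_2(2j-1)\le r+1=d_s$ for all $1\le j\le s$. This is exactly the paper's argument, and with it inserted the rest of your proof goes through as written.
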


\begin{proof}
The first coordinate in \eqref{eq:bch-map} makes $h_s$ injective, and $h_s(0)=0$. A triangle in $G_{m,s}$ would give distinct nonzero $a,b,c$ with $h_s(a)+h_s(b)+h_s(c)=0$. The first two coordinates imply $c=a+b$ and
\[
 0=a^3+b^3+(a+b)^3=ab(a+b),
\]
which is impossible.

For a nonnegative integer $e$, write $\wt_2(e)$ for the number of ones in its binary expansion. The map $a\mapsto a^e$ is a product of $\wt_2(e)$ Frobenius maps. Each Frobenius map is $\F_2$-linear, and field multiplication is bilinear over $\F_2$. Thus its binary coordinate degrees are at most $\wt_2(e)$. It follows that the coordinate degrees of $h_s$ are at most
\[
 d_s=\max_{1\le j\le s}\wt_2(2j-1)
     =1+\lfloor\log_2s\rfloor.
\]
To verify the equality, write $2^r\le s<2^{r+1}$. The exponent $2^{r+1}-1$ occurs and has $r+1$ ones; every integer with at least $r+2$ ones is at least $2^{r+2}-1>2s-1$.

Pass to binary coordinates on $V_{m,s}$. The coordinate degree bound is unchanged. Equations \eqref{eq:matrix} and \eqref{eq:bch-graph} identify $M_{h_s}$ with $I+A(G_{m,s})$, and Theorem~\ref{thm:polynomial-rank} gives
\[
 1-R(C_{m,s})
 \le2\exp\!\left(-\frac{\lceil m/d_s\rceil^2}{2D_{m,s}}\right)
 \le2\exp\!\left(-\frac{m}{2s d_s^2}\right).
\]
This proves \eqref{eq:bch-rate}. For fixed $s$ the exponent tends to $-\infty$. Under the stated growth condition, $s d_s^2=o(m)$, so the same argument applies. That condition also gives $s=o(m)$, ensuring $2s+1\le2^m-1$ for all sufficiently large $m$.
\end{proof}

For $s=2$, the construction is the BCH family treated in~\cite{DHWX}. For $s=3$, the map is $(a,a^3,a^5)$, its coordinate degrees are at most two, and \eqref{eq:bch-rate} gives
\[
 1-R(C_{m,3})\le2e^{-m/24}.
\]
The argument applies uniformly to all the odd power checks in \eqref{eq:bch-map}; it uses an upper bound on the rank of the storage matrix rather than an exact rank formula.

\section{Fibonacci ranks for the double-error-correcting family}\label{sec:fibonacci}

The general degree estimate proves convergence of the rate for every fixed error budget. For two errors, a more detailed analysis determines its exponential order. We use the polynomial representation from~\cite[Section~II]{DHWX}, and compute its coefficient rank by tracking an intersection of coefficient spaces.

For $m\ge1$, put $q=2^m$ and let $\widehat G_m=\Cay(\F_q^2,\{(a,a^3):a\in\F_q^*\})$, with adjacency matrix $A_m$. For the admissible two-error parameters, its connected components are copies of the graph $G_{m,2}$ defined in \eqref{eq:bch-graph}.

The rank of $I+A_m$ also occurs in the theory of almost perfect nonlinear (APN) functions. For $f:\F_q\to\F_q$, put $\Gamma_f=\{(a,f(a)):a\in\F_q\}$. The binary incidence rank of the points of $\F_q^2$ and the translates of $\Gamma_f$ is called the $\Gamma$-rank of $f$; see~\cite[Section~2]{EP} and~\cite[Section~3.3]{Kaleyski}. Indexing the blocks by all translations, the incidence entry for point $(x,b)$ and block $\Gamma_f+(y,c)$ is one precisely when $b+c=f(x+y)$. For the Gold function $f(a)=a^3$, this is exactly the matrix $I+A_m$. Repeated blocks, if any, do not change its rank. Thus the evaluated-rank bounds below also give bounds on this established invariant of $a\mapsto a^3$.

Small-dimensional values of the $\Gamma$-rank of $a\mapsto a^3$ were tabulated by Edel and Pott~\cite[Tables~3--4]{EP}. The bounds below determine its exponential order as the extension degree tends to infinity.

As ordinary polynomials, also allowing $m=0$, set
\[
 d(x,u;y,v)=x^2y+xy^2+u+v,\qquad P_m=d^{\,2^m-1}\quad(m\ge0).
\]
For $m\ge1$ and vertices $(x,b),(y,c)$ of $\widehat G_m$, membership of their difference in $\{(a,a^3):a\in\F_q\}$ is the equation $b+c+(x+y)^3=0$. The bijective row and column substitutions $u=b+x^3$, $v=c+y^3$ change its left side to $d(x,u;y,v)$. Thus $I+A_m$ is permutation equivalent to the evaluation of $1+P_m$.

We first compute the ordinary coefficient rank. Let $K_m$ be the coefficient matrix of $P_m$, with rows indexed by monomials in $(x,u)$ and columns by monomials in $(y,v)$, and put $r_m=\rank_{\F_2}K_m$, as in~\cite[Definition~4]{DHWX}. These are ordinary polynomial coefficients, before reduction by finite-field identities. Write $f_0=0$, $f_1=1$, and $f_{j+2}=f_{j+1}+f_j$ for the Fibonacci numbers.

\begin{theorem}\label{thm:fibonacci-coefficients}
For every $m\ge0$,
\begin{equation}\label{eq:fibonacci-rank}
 r_m=2^m f_{m+2}.
\end{equation}
Equivalently, $r_0=1$, $r_1=4$, and $r_{m+1}=2r_m+4r_{m-1}$ for $m\ge1$.
\end{theorem}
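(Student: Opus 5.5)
The plan is to prove the recursion $r_{m+1}=2r_m+4r_{m-1}$ together with the base cases $r_0=1$ and $r_1=4$; the closed form then follows. Indeed $2^m f_{m+2}$ satisfies the same recursion, since $2^{m+1}f_{m+3}=2\cdot2^mf_{m+2}+4\cdot2^{m-1}f_{m+1}$ is just $f_{m+3}=f_{m+2}+f_{m+1}$, and it has the same initial values $1$ and $4$. The base cases are direct. We have $P_0=1$. Also $P_1=d=x^2y+xy^2+u+v$ has four monomials whose $(x,u)$-parts $x^2,x,u,1$ are distinct and whose $(y,v)$-parts $y,y^2,1,v$ are distinct, so $K_1$ is a permutation of a $4\times4$ identity block.

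Throughout I will view $r_m$ as the tensor rank of $P_m$ in $\F_2[x,u]\otimes\F_2[y,v]$. Equivalently, $r_m$ is the dimension of the row space $W_m=\spanF\{[y^av^b]P_m\}\subseteq\F_2[x,u]$, and also of the symmetric column space $W_m'$. The structural input is Frobenius. Since $2^{m+1}-1=(2^m-1)+2^m$, we have
\[
P_{m+1}=P_m\,d^{2^m}=P_m\bigl(u^{2^m}+v^{2^m}\bigr)+P_m R_m,\qquad R_m=x^{2^{m+1}}y^{2^m}+x^{2^m}y^{2^{m+1}}.
\]
The $u$-degree and the $v$-degree of $P_m$ are both at most $2^m-1$. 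Hence grading the $(x,u)$ side by whether the $u$-exponent is $\ge2^m$, and the $(y,v)$ side likewise in $v$, splits $K_{m+1}$ into a $2\times2$ block matrix of the form
\[
\begin{pmatrix} K(P_mR_m) & K(P_m)\\ K(P_m) & 0\end{pmatrix}.
\]
Here the shifts by $u^{2^m}$ or $v^{2^m}$ are injective relabellings of rows or columns. Standard block elimination then gives
\[
r_{m+1}=2r_m+\rho_m,
\]
where $\rho_m$ is the rank of $P_mR_m$ in the quotient $\bigl(\F_2[x,u]/W_m\bigr)\otimes\bigl(\F_2[y,v]/W_m'\bigr)$.

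It remains to show $\rho_m=4r_{m-1}$ for $m\ge1$; this is the main obstacle, the ``intersection of coefficient spaces'' step. For it I apply the same decomposition one level down. Write $P_m=P_{m-1}\bigl(u^{2^{m-1}}+v^{2^{m-1}}\bigr)+P_{m-1}R_{m-1}$, so that
\[
W_m=W_{m-1}+u^{2^{m-1}}W_{m-1}+(\text{row space of }P_{m-1}R_{m-1}).
\]
Multiplying $P_m$ by $R_m$ produces four pieces, namely $u^{2^{m-1}}P_{m-1}R_m$, $v^{2^{m-1}}P_{m-1}R_m$, and the two pieces coming from $P_{m-1}R_{m-1}R_m$. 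The first two pieces are $P_{m-1}$ times the two rank-one summands of $R_m$, shifted in $u$ or $v$. They contribute independent copies of $W_{m-1}$ multiplied by $x^{2^{m+1}}$ and by $x^{2^m}$. I expect these to survive modulo $W_m$ by a degree count: the $x$-degree of $W_m$ is at most $2^{m+1}-2$, and an $x^{2^{m+1}}$ or $x^{2^m}u^{2^{m-1}}$ factor cannot be matched. Naively this gives more than $4r_{m-1}$ independent directions.

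The cancellation I must verify concerns the cross terms. In $R_{m-1}R_m$, the product $x^{2^m}y^{2^{m-1}}\cdot x^{2^m}y^{2^{m+1}}$ carries to $x^{2^{m+1}}$ and collides with other terms. This carry is exactly what turns a rate-$4$ recursion into the Fibonacci one. Concretely, I will write out $R_{m-1}R_m$ as four monomials and identify which of the resulting $(x,u)$-coefficient polynomials already lie in $W_m$, or are sums of ones already counted. The goal is to show the quotient is spanned by exactly four independent copies of $W_{m-1}$ tensored with four copies of $W'_{m-1}$ in matched pairs, giving $\rho_m=4r_{m-1}$. For the upper bound this is a spanning argument. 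For the lower bound, I will exhibit a leading-monomial order, lexicographic with the high $2$-adic digits of the $x$- and $u$-exponents first, under which these four blocks have disjoint leading terms. The inductive hypothesis that $K_{m-1}$ has rank $r_{m-1}$ then transfers to each block. As a sanity check, $m=1$ should give $\rho_1=4$ and hence $r_2=12$; I would verify this directly from $d^3$ before running the induction.
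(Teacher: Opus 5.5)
There is a genuine gap. Your reduction is sound as far as it goes. The base cases are right, and so is the check that $2^mf_{m+2}$ satisfies the recursion. The high-bit splitting $P_{m+1}=P_m(u^{2^m}+v^{2^m})+P_mR_m$, its block form, and the Schur-complement identity $r_{m+1}=2r_m+\rho_m$ are also correct. But that is the easy part. The whole content of the theorem is $\rho_m=4r_{m-1}$, and for it you give only expectations (``I expect these to survive'', ``the cancellation I must verify'', an unspecified leading-monomial order), while conceding that your own count overshoots $4r_{m-1}$.

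The one concrete ingredient you state is false. The identity $W_m=W_{m-1}+u^{2^{m-1}}W_{m-1}+(\text{row space of }P_{m-1}R_{m-1})$ holds only with $\subseteq$. Each column of $K_m$ with $v$-exponent below $2^{m-1}$ has the form $u^{2^{m-1}}c+b$, with $c$ and $b$ coupled rather than independent. Already for $m=2$, the coefficient of $y^2$ in $d^3$ is $xu^2+x^4u$, and
\[
W_2=\spanF\{1,x,x^2,x^3,x^4,x^6,u,u^2,u^3,x^2u^2,x^5+x^2u,xu^2+x^4u\}
\]
has dimension $12$ and does not contain $x^4u$. Your right-hand side contains $x^4u$ (the $y^2$-coefficient of $P_1R_1$) and has dimension $13$. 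Quotienting by that larger space can only lower the rank you compute.

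More fundamentally, your high-bit recursion does not close. Describing $W_m$ modulo what you need requires the quotient structure of $P_{m-1}R_{m-1}$, which in turn requires the same at level $m-2$. You identify no invariant that obeys a two-term recursion.

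The missing idea is exactly such an invariant. Split by exponent parity instead: $P_{m+1}=dP_m^2$ gives $r_{m+1}=2\dim(U_m+xU_m)$ with $U_m=\operatorname{im}K_m$ (your $W_m$). Hence your quantity is $\rho_m=2\bigl(r_m-\dim(U_m\cap xU_m)\bigr)$, and $\rho_m=4r_{m-1}$ is equivalent to $\dim(U_m\cap xU_m)=r_m-2r_{m-1}$.

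The paper proves this from the explicit description $U_m=W^{[2]}\oplus\{uf^2+xg^2:(f,g)\in L\}$, where $W=U_{m-1}+xU_{m-1}$ and $L=\{(K_{m-1}c,K_{m-1}X^{\mathsf T}c)\}$. In the intersection, terms with odd $u$-exponent must have vanishing $f$-part. This yields $U_m\cap xU_m=(xZ)^{[2]}\oplus xZ^{[2]}$ with $Z=\{g:(0,g)\in L\}$ and $\dim Z=\dim W-\dim U_{m-1}=r_m/2-r_{m-1}$. Without this intersection computation, or an equivalent one adapted to your high-bit grading, the recurrence is not established.
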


\begin{proof}
We regard coefficient vectors as elements of $E=\F_2[x,u]$, with its monomial basis, and identify the column variables $(y,v)$ with $(x,u)$. Each $K_m$ has finite support and is symmetric. Let $X$ be multiplication by $x$ on $E$; its transpose in the monomial basis deletes one $x$ when the exponent is positive and otherwise gives zero. Products involving $K_m$ have finite support, so their ranks can be computed after finite zero padding. For a subspace $W\subseteq E$, write $W^{[2]}=\{w^2:w\in W\}$.

Fix $m\ge1$ and abbreviate $K=K_{m-1}$, $U=\operatorname{im}K$, and $W=U+xU$. The identity $P_m=dP_{m-1}^2$ splits the coefficient matrix according to the parities of the exponents. In the order $(0,0),(0,1),(1,0)$ for both pairs of variables, deleting the zero $(1,1)$ rows and columns and dividing the even parts of exponents by two gives
\begin{equation}\label{eq:parity-block}
 K_m=\begin{pmatrix}
 0&K&XK\\
 K&0&0\\
 KX^{\mathsf T}&0&0
 \end{pmatrix}.
\end{equation}
The two nonzero rectangular blocks are transposes and have disjoint row and column supports. Hence
\begin{equation}\label{eq:rank-sum}
 r_m=2\rank[\,K\ \ XK\,]=2\dim W.
\end{equation}

To obtain a recurrence, we determine how much the new image intersects its translate by $x$. Define
\[
 L=\{(Kc,KX^{\mathsf T}c):c\in E\}\subseteq U\oplus U,
 \qquad Z=\{g:(0,g)\in L\}.
\]
Transposing $[\,K\ \ XK\,]$ shows that $\dim L=\dim W$. The projection of $L$ onto its first coordinate is onto $U$, and therefore
\begin{equation}\label{eq:z-dimension}
 \dim Z=\dim W-\dim U.
\end{equation}
Returning from the parity blocks to polynomial coordinates, \eqref{eq:parity-block} identifies the image $U_m=\operatorname{im}K_m$ as
\begin{equation}\label{eq:image-description}
 U_m=W^{[2]}\oplus\{u f^2+xg^2:(f,g)\in L\}.
\end{equation}
The two summands have different exponent parities, as do the two displayed coordinates inside the second summand.

The key step behind the Fibonacci recurrence is the following intersection identity:
\begin{equation}\label{eq:intersection}
 U_m\cap xU_m=(xZ)^{[2]}\oplus xZ^{[2]}.
\end{equation}
In an equality between an element of \eqref{eq:image-description} and $x$ times another such element, the terms with odd $u$-exponent have opposite $x$-parities. Their $f$-coordinates must therefore both vanish. The remaining spaces on the two sides are
\[
 W^{[2]}\oplus xZ^{[2]},\qquad
 xW^{[2]}\oplus x^2Z^{[2]}.
\]
Their even-$x$ intersection is $x^2Z^{[2]}=(xZ)^{[2]}$, since $xZ\subseteq W$, and their odd-$x$ intersection is $xZ^{[2]}$, since $Z\subseteq W$. Both inclusions follow from $Z\subseteq U$. This proves \eqref{eq:intersection}, including that every vector in its right side belongs to both spaces.

Squaring is injective and $\F_2$-linear. Equations \eqref{eq:rank-sum}--\eqref{eq:intersection} consequently give
\[
 \dim(U_m\cap xU_m)=2\dim Z=r_m-2r_{m-1}.
\]
Applying \eqref{eq:rank-sum} at the next index yields
\[
 r_{m+1}=2\dim(U_m+xU_m)
 =4r_m-2(r_m-2r_{m-1})=2r_m+4r_{m-1}.
\]
Finally $P_0=1$ gives $r_0=1$, and the four terms of $d$ give $r_1=4$. The Fibonacci recurrence proves \eqref{eq:fibonacci-rank}.
\end{proof}

Evaluation can identify different monomials, so the rank of the storage matrix requires a separate estimate.

\begin{theorem}\label{thm:fibonacci-evaluation}
For every $m\ge1$,
\begin{equation}\label{eq:two-sided}
 2^m f_{m+1}\le\rank_{\F_2}(I+A_m)\le2^m f_{m+2}.
\end{equation}
Consequently, with $\varphi=(1+\sqrt5)/2$,
\[
 \rank_{\F_2}(I+A_m)=\Theta((1+\sqrt5)^m),
 \qquad 1-R(C_{m,2})=\Theta((\varphi/2)^m).
\]
The storage-rate assertion is taken as $m\to\infty$ in the admissible BCH range.
\end{theorem}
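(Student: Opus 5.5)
The plan is to prove the two inequalities in \eqref{eq:two-sided} separately, using the permutation equivalence between $I+A_m$ and the evaluation of $1+P_m$ on $\F_q^2\times\F_q^2$. The asymptotic statements then follow at once. We have $2^mf_{m+1}$ and $2^mf_{m+2}$ both equal to $\Theta(2^m\varphi^m)=\Theta((1+\sqrt5)^m)$. The components of $\widehat G_m$ are copies of $G_{m,2}$, so by the remark at the end of Section~\ref{sec:storage} we get $1-R(C_{m,2})=\rank(I+A_m)/q^2=\Theta((\varphi/2)^m)$.

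\emph{Upper bound.} Write $\mathcal E(P)$ for the $q^2\times q^2$ evaluation matrix of a polynomial $P$. The coefficient matrix factors $\mathcal E(P_m)$ as $R\,K_m\,C$, where $R$ and $C$ are monomial-evaluation matrices. Hence $\rank\mathcal E(P_m)\le r_m=2^mf_{m+2}$ by Theorem~\ref{thm:fibonacci-coefficients}. We have $\mathcal E(1+P_m)=J+\mathcal E(P_m)$, where $J$ is the all-one matrix, so it suffices to show that the all-one column lies in the column space of $\mathcal E(P_m)$. Fix $y$ and sum the columns $(y,v)$ over $v\in\F_q$. For each row $(x,u)$, $d$ runs over $\F_q$ bijectively, and $d^{q-1}$ is nonzero exactly for $d\ne0$. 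The sum is therefore $q-1\equiv1$, so this combination of columns is the all-one column. This gives $\rank(I+A_m)\le 2^mf_{m+2}$.

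\emph{Lower bound.} Here $2^mf_{m+1}=2r_{m-1}$, which is the rank of a single nonzero block pair of \eqref{eq:parity-block} at index $m$. The plan has three steps.
\begin{enumerate}
\item The ordinary polynomial $P_{m-1}$ has degree at most $2(2^{m-1}-1)<q$ in each of $x,y$ and at most $2^{m-1}-1<q$ in each of $u,v$. Evaluation on $\F_q$ is injective on polynomials with all exponents below $q$, and the evaluation matrix factors as a tensor product of invertible Vandermonde-type matrices. Hence $\rank\mathcal E(P_{m-1})=r_{m-1}$.
\item Since squaring is a bijection of $\F_q$, $\mathcal E(P_{m-1}^2)$ is a row and column permutation of $\mathcal E(P_{m-1})$. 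Call it $E$; it has rank $r_{m-1}$.
\item With $X,U$ the diagonal matrices of $x$ and $u$ values on rows, and $Y,V$ the corresponding diagonal matrices on columns,
\[
 \mathcal E(P_m)=X^2EY+XEY^2+UE+EV.
\]
The aim is to extract two ``independent copies'' of $E$ from this expression. I would restrict to a row set and a column set on which the $u$-dependence separates, mimicking the parity blocks $\begin{pmatrix}0&K\\K&0\end{pmatrix}$ of \eqref{eq:parity-block}. One candidate is to use two affine $\F_2$-hyperplanes in the $u$ and $v$ coordinates together with a trace functional, which is the finite-field analogue of ``odd versus even $u$-exponent''. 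Then I would show that the resulting $2\times2$ block matrix has rank at least $2\rank E$. A rank count through $\mathcal E(P_m)^2=0$, as in the proof of Theorem~\ref{thm:polynomial-rank}, might help by giving $\operatorname{im}\subseteq\ker$.
\end{enumerate}

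The main obstacle is step~3. After evaluation, $u$ and $u^q$ coincide, so the clean exponent-parity splitting that drives Theorem~\ref{thm:fibonacci-coefficients} is no longer available. One needs a substitute decomposition of $\F_q$, through the trace or squaring, under which the $UE+EV$ part contributes an anti-diagonal block of rank $2r_{m-1}$. At the same time, the $x$-terms must be shown not to cancel it. I would first test this on small $m$ against the Edel--Pott tables, to confirm that $2r_{m-1}$ is attained, before committing to the precise hyperplane choice.
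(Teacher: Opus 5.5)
Your upper bound is correct and is the paper's argument, with a column sum over $v$ in place of the row sum over $u$. The asymptotic deduction is also fine. For the lower bound you also need the reverse comparison $\rank\mathcal E(1+P_m)\ge\rank\mathcal E(P_m)$. Your computation gives it as well: $\sum_{v\in\F_q}(1+d^{\,q-1})=q+(q-1)\equiv1$, so the all-one column lies in the column space of $\mathcal E(1+P_m)$ too. The paper gets this equality from the invertibility of $I+J$.

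The genuine gap is step~3, which is a search plan rather than an argument, and the obstacle you diagnose is not the real one. The $u$- and $v$-degrees of $P_m=d^{\,q-1}$ equal $q-1<q$, so the identification $u^q=u$ is never used in those variables. Only the $x$- and $y$-exponents overflow.

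The missing idea is to split reduced coefficients rather than field points. In the bases $x^iu^j$ and $y^iv^j$ with $0\le i,j<q$, evaluation is a tensor product of invertible Vandermonde matrices. So the $\F_q$-rank of $\mathcal E(P_m)$ equals the rank of the reduced coefficient matrix of $P_m$, and there the $u,v$ exponent structure is intact.

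The paper writes $P_m=P_{m-1}d^{\,Q}$ with $Q=2^{m-1}$, so that on $\F_q^4$ one has $d^{\,Q}=xy^Q+x^Qy+u^Q+v^Q$. Since $P_{m-1}$ is already reduced, with $u,v$-degrees below $Q$, split rows by $u$-exponent $<Q$ or $\ge Q$ and columns likewise. This yields $\begin{pmatrix}B&K\\K&0\end{pmatrix}$ with $K$ a copy of $K_{m-1}$, hence rank at least $2r_{m-1}=2^mf_{m+1}$.

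Your own factorization $P_m=d\,P_{m-1}^2$ can be completed the same way. The $u,v$-exponents of $P_{m-1}^2$ are even and at most $q-2$, so reduction in $x,y$ leaves the $u$- and $v$-parities untouched. The odd-$u$/odd-$v$ block then vanishes. The two mixed-parity blocks are shifted copies of the reduced coefficient matrix of $P_{m-1}^2$, whose rank is $\rank E=r_{m-1}$ by your steps~1--2.

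By contrast, in evaluation space multiplication by $u$ is the diagonal matrix $U$. Restricting to hyperplanes or trace classes of points gives no evident way to isolate two copies of $E$. The identity $\mathcal E(P_m)^2=0$ can only bound the rank from above. Finally, since the entries of $\mathcal E(P_m)$ lie in $\F_2$, its $\F_q$-rank equals its $\F_2$-rank, which completes the transfer to $I+A_m$.
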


\begin{proof}
By the preceding identification, it remains to bound the rank of the evaluated matrix of $1+P_m$. Write
\[
 E_m=\bigl(P_m(x,u;y,v)\bigr)_{(x,u),(y,v)\in\F_q^2}
\]
for the binary evaluation matrix of $P_m$. Let $J$ sum rows over $u\in\F_q$ with $x$ fixed and repeat that sum at each value of $u$. Then $J^2=qJ=0$, so $I+J$ is invertible over $\F_2$. For every fixed $x,y,v$, exactly $q-1$ choices of $u$ make $d\ne0$, and hence the sum over $u$ of $P_m=d^{q-1}$ is one. Thus
\[
 (I+J)E_m=E_m+\mathbf1\mathbf1^{\mathsf T},
\]
where $\mathbf1$ is the all-one vector indexed by $\F_q^2$. The right side is the evaluation matrix of $1+P_m$, so it has the same binary rank as $E_m$. Evaluation factors through $K_m$, and extension of scalars preserves the rank of a binary matrix. This proves the upper bound in \eqref{eq:two-sided} from Theorem~\ref{thm:fibonacci-coefficients}.

For the lower bound, set $Q=2^{m-1}$, so that $q=2Q$. Frobenius gives
\[
 P_m=P_{m-1}(x^q y^Q+x^Q y^q+u^Q+v^Q).
\]
As a function on $\F_q^4$, this is represented by the reduction of
\begin{equation}\label{eq:high-bit}
 P_{m-1}(xy^Q+x^Qy+u^Q+v^Q)
\end{equation}
modulo $x^q=x$ and $y^q=y$. The individual $x,y$ degrees of $P_{m-1}$ are at most $2(Q-1)=q-2$, and its $u,v$ degrees are at most $Q-1$. Thus $P_{m-1}$ itself requires no reduction over $\F_q$, and the $u$- and $v$-degrees in \eqref{eq:high-bit} stay below $q$.

Use the reduced row basis $x^iu^j$ and column basis $y^iv^j$, with $0\le i,j<q$. Partition rows according to $j<Q$ or $j\ge Q$, and do the same for the $v$-exponent in the columns. The cross terms $xy^Q+x^Qy$ affect only the upper-left block, even after reduction. The terms $u^Q$ and $v^Q$ give the two off-diagonal blocks, each an embedded copy $K$ of the ordinary matrix $K_{m-1}$. The lower-right block is zero. The reduced coefficient matrix therefore has the form
\[
 \begin{pmatrix}B&K\\ K&0\end{pmatrix}.
\]
Its rank is at least $2\rank K$: selecting a nonsingular minor of $K$ in both off-diagonal positions gives a nonsingular minor of twice that order, independently of $B$. Theorem~\ref{thm:fibonacci-coefficients} gives $2\rank K=2^m f_{m+1}$.

Evaluation of all monomials of degree below $q$ at the $q$ distinct field elements is invertible, and the two-variable evaluation matrices are tensor products of these Vandermonde matrices. Thus reduced coefficient rank equals evaluated rank over $\F_q$; this is the evaluation principle in~\cite[Lemma~24]{DHWX}. The evaluated entries of $P_m$ are binary, so their rank is unchanged on extending $\F_2$ to $\F_q$. This proves the lower bound, including $m=1$, where $P_0=1$. The asymptotic statements now follow from $f_{m+j}=\Theta(\varphi^m)$ for fixed $j$, division by $q^2$, and the component normalization described in Section~\ref{sec:storage}.
\end{proof}

The estimate in~\cite[Theorems~2 and~9]{DHWX} uses the number of supported row monomials, whose exponential growth is $(2+\sqrt2)^m$. Theorem~\ref{thm:fibonacci-coefficients} instead computes the rank of the corresponding coefficient matrix, with growth $(1+\sqrt5)^m$. The lower bound in Theorem~\ref{thm:fibonacci-evaluation} shows that this latter base is also the exact exponential order of the evaluated storage-matrix rank. In terms of the ambient graph order $N=q^2$, the deficiency is $\Theta(N^{-\varepsilon/2})$, where $\varepsilon=\log_2(2/\varphi)=0.3057580863\ldots$.

\section{Quadratic evaluation and symmetric matrices}\label{sec:quadratic}

We now leave the BCH family and turn to the quadratic Reed--Muller construction proposed by Barg and Z\'emor~\cite[author version, Section~7.2, item~3]{BZ}. They left the determination of the associated storage-code rates open. The BCH argument does not establish convergence to one for this family, whose target dimension is quadratic rather than linear in the input dimension. Our purpose here is to express the remaining question in terms of symmetric binary matrices and, subsequently, the graph's cut and cycle spaces, so that information in either setting can be transferred to the storage problem.

The coordinates of the quadratic evaluation map are the nonconstant squarefree monomials of degree at most two, namely $x_i$ and $x_ix_j$ with $i<j$. These are the quadratic Reed--Muller evaluation coordinates after removing the constant coordinate. They can be assembled into a single symmetric matrix, which gives the desired reformulation.

For $m\ge1$, let $X_m=\operatorname{Sym}_m(\F_2)$, of dimension $D=m(m+1)/2$, and define $\nu_m:\F_2^m\to X_m$ by $\nu_m(x)=xx^{\mathsf T}$. Let $G_m^{\mathrm{sym}}$ have vertex set $X_m$, with distinct $S,T$ adjacent when $\rank(S+T)=1$. This rank-one relation belongs to the translation association scheme of symmetric bilinear forms; see~\cite[Section~2]{Schmidt}. All ranks of matrices with entries in $\F_2$ are taken over $\F_2$.

\begin{proposition}\label{prop:symmetric}
The graph $G_m^{\mathrm{sym}}$ is the connected Cayley graph generated by $\nu_m(\F_2^m\setminus\{0\})$. It has degree $2^m-1$, is triangle-free, and its full-parity matrix is
\begin{equation}\label{eq:symmetric-matrix}
 M_{\nu_m}(S,T)=\mathbf1\{\rank(S+T)\le1\}
              =\bigl(I+A(G_m^{\mathrm{sym}})\bigr)_{S,T}.
\end{equation}
\end{proposition}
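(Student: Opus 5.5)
The plan is to first characterize the rank-one symmetric matrices over $\F_2$, and then read off each assertion of Proposition~\ref{prop:symmetric} from that characterization together with the definitions in Section~\ref{sec:storage}.

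\textbf{Step 1: the rank-one symmetric matrices are exactly the nonzero $\nu_m(x)$.} Let $R\in X_m$ have rank one. Then $R=ab^{\mathsf T}$ for nonzero $a,b\in\F_2^m$. Symmetry gives $ab^{\mathsf T}=ba^{\mathsf T}$. Comparing column spaces gives $a\in\spanF\{b\}$, and over $\F_2$ this forces $a=b$. Conversely, $xx^{\mathsf T}$ has rank one for every $x\ne0$, since its column space is $\spanF\{x\}$.

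We also need $\nu_m$ to be injective. This holds because the diagonal of $xx^{\mathsf T}$ is $(x_i^2)_i=x$. Moreover $\nu_m(0)=0$. Hence the adjacency rule $\rank(S+T)=1$ is exactly $S+T\in\nu_m(\F_2^m\setminus\{0\})$, so $G_m^{\mathrm{sym}}$ is the Cayley graph $\Cay(X_m,\nu_m(\F_2^m\setminus\{0\}))$. Injectivity also gives that the degree is $2^m-1$.

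\textbf{Step 2: connectivity.} By Section~\ref{sec:storage}, it suffices to show that the connection set spans $X_m$. The matrices $e_ie_i^{\mathsf T}$ are in the set. For $i<j$ the sum
\[
(e_i+e_j)(e_i+e_j)^{\mathsf T}+e_ie_i^{\mathsf T}+e_je_j^{\mathsf T}
\]
is the symmetric matrix with ones exactly in positions $(i,j)$ and $(j,i)$. These matrices together span $\operatorname{Sym}_m(\F_2)$.

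\textbf{Step 3: triangle-freeness.} This is the step requiring a small argument, paralleling the triangle argument for the BCH graph in Theorem~\ref{thm:bch}. Suppose there were a triangle. Then there are distinct nonzero $a,b,c$ with
\[
aa^{\mathsf T}+bb^{\mathsf T}+cc^{\mathsf T}=0.
\]
Taking diagonals gives $a+b+c=0$, so $c=a+b$. Substituting, the equation becomes $ab^{\mathsf T}+ba^{\mathsf T}=0$, that is, $ab^{\mathsf T}$ is symmetric. By the argument of Step~1 this forces $a=b$, which contradicts distinctness.

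\textbf{Step 4: the full-parity matrix.} Since $\nu_m$ is injective and $\nu_m(0)=0$, the remark following \eqref{eq:matrix} applies. For $S,T\in X_m$, the entry $M_{\nu_m}(S,T)$ equals $1$ exactly when $S+T$ is the zero matrix or a rank-one symmetric matrix, which by Step~1 is the condition $\rank(S+T)\le1$. This is precisely the $(S,T)$ entry of $I+A(G_m^{\mathrm{sym}})$, proving \eqref{eq:symmetric-matrix}.

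No step is a real obstacle; the only point needing care is the fact that a symmetric rank-one matrix $ab^{\mathsf T}$ over $\F_2$ forces $a=b$, which is used in both Step~1 and Step~3.
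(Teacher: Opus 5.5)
Your proposal is correct and follows essentially the same route as the paper: symmetric rank-one matrices over $\F_2$ are exactly the $xx^{\mathsf T}$ with $x\ne0$, $\nu_m$ is injective via the diagonal, the connection set spans via $E_{ii}$ and $E_{ij}+E_{ji}$, and triangle-freeness reduces to $ab^{\mathsf T}+ba^{\mathsf T}=0$. The only cosmetic difference is in the triangle step: you finish by reusing your Step~1 column-space argument, whereas the paper applies a linear functional that is one on one vector and zero on the other.
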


\begin{proof}
The diagonal of $xx^{\mathsf T}$ is $x$, and its strict upper-triangular entries are $x_ix_j$. Thus $\nu_m$ is the quadratic evaluation map under the indicated linear identification of coordinates, and it is injective. Every nonzero symmetric matrix of rank one over $\F_2$ has the form $xx^{\mathsf T}$ for a unique nonzero $x$: a rank-one matrix is $uv^{\mathsf T}$, and symmetry forces the nonzero vectors $u,v$ to be proportional, hence equal over $\F_2$. This proves \eqref{eq:symmetric-matrix} and the degree assertion.

The generators span $X_m$. Indeed $\nu_m(e_i)=E_{ii}$, and
\[
 \nu_m(e_i+e_j)+\nu_m(e_i)+\nu_m(e_j)=E_{ij}+E_{ji}.
\]
For triangle-freeness, three distinct nonzero generators summing to zero would give $\nu_m(x)+\nu_m(y)=\nu_m(z)$, with $z=x+y$ by the diagonal coordinates. The remaining equation is $xy^{\mathsf T}+yx^{\mathsf T}=0$. If $x,y$ are distinct nonzero binary vectors they are linearly independent; a linear functional taking the values one on $x$ and zero on $y$, applied on the left, would give $y^{\mathsf T}=0$, a contradiction.
\end{proof}

Theorem~\ref{thm:polynomial-rank} applies with $d=2$ and $k=\lceil m/2\rceil$. Here $k/\sqrt D\to1/\sqrt2$, and the exact binomial bound gives
\begin{equation}\label{eq:rm-constant}
 \limsup_{m\to\infty}2^{-D}\rank M_{\nu_m}
 \le2\Phi(-1/\sqrt2)=0.4795001222\ldots,
\end{equation}
where $\Phi$ is the standard normal distribution function. To see this, standardize a binomial random variable with mean $D/2$ and standard deviation $\sqrt D/2$ in each of the two terms of \eqref{eq:exact-rank}. Both thresholds tend to $-1/\sqrt2$, so the central limit theorem gives \eqref{eq:rm-constant}. Thus this method yields a storage rate of at least $0.5204998778\ldots-o(1)$ for the quadratic family.

\emph{Open question.} Does the full-parity storage rate $R(C(G_m^{\mathrm{sym}}))$ tend to one as $m\to\infty$? Equivalently, does the binary rank satisfy
\begin{equation}\label{eq:rm-open}
 \rank_{\F_2}\!\left(\mathbf1\{\rank(S+T)\le1\}\right)_{S,T\in X_m}
 =o\!\left(2^{m(m+1)/2}\right).
\end{equation}
The two ranks in this expression act on different spaces: $S+T$ is an $m\times m$ matrix, while the matrix of indicators is indexed by all symmetric matrices. Formula \eqref{eq:rm-open} concerns linear dependence among these indicator rows. Counts of symmetric matrices by rank describe the connection relations but do not by themselves determine this binary rank.

There is also an exact interpretation in terms of the cut and cycle spaces of the same graph. For any connected graph $G$ on $N$ vertices, let $B$ be its binary incidence matrix, with rows indexed by edges and columns by vertices, and let $\mathcal B=\operatorname{im}B\subseteq\F_2^{E(G)}$ be its cut space. Its orthogonal complement is the cycle space. If every vertex has odd degree, then
\[
 B^{\mathsf T}B=I+A(G),\qquad
 \rank(I+A(G))=N-1-\dim(\mathcal B\cap\mathcal B^\perp).
\]
The first identity follows by counting common incident edges. For the second, $\rank B=N-1$ by connectedness, and the kernel of $B^{\mathsf T}$ restricted to $\operatorname{im}B$ is $\mathcal B\cap\mathcal B^\perp$. The intersection is called the bicycle space, and the second identity is the standard relation with binary Laplacian nullity; see, for example, the proof in~\cite[Proposition~2.11]{Hemmer}. Accordingly, \eqref{eq:rm-open} asks whether its dimension is $N-o(N)$ for $G_m^{\mathrm{sym}}$. Theorem~\ref{thm:bch} already gives this conclusion for the BCH graphs in its stated parameter range.

\section{Conclusions}

The full-parity storage codes obtained from primitive binary BCH codes have asymptotically unit rate for every fixed error budget of at least two. A single bound for polynomial maps proves this statement and also permits an error budget that grows with the extension degree. The bound depends on the Boolean coordinate degree and the dimension of the syndrome space, so it can be applied to other parametrized connection sets with these quantities under control.

For the double-error-correcting family, the intersection of coefficient spaces gives an exact Fibonacci rank formula. The accompanying finite-field lower bound determines the exponential order of the storage deficiency, refining the qualitative unit-rate conclusion in this case.

Determining the largest growth range of the BCH error budget that still gives asymptotically unit rate remains an open direction. For quadratic Reed--Muller evaluation, the rank-one graph on symmetric binary matrices gives an exact formulation of the corresponding question. In that family the target dimension is quadratic in the input dimension, and a stronger rank argument is needed to reach asymptotically unit rate.

\section*{Declarations}
The author received no funding for this work. The author declares no conflict of interest.

\end{document}